\documentclass[preprint,nofootinbib,amsmath,amssymb,amsfonts,aps,prd]{revtex4-1}

\usepackage{setspace}
\usepackage{graphicx}
\usepackage{subfigure}
\usepackage[bookmarks=true, pdfstartview={FitH}, bookmarksopen=true, bookmarksopenlevel=1, linktoc=page]{hyperref}
\usepackage{amsthm}
\usepackage{mathrsfs}
\usepackage{gensymb}
\usepackage{color}

\DeclareMathOperator{\lie}{\pounds}

\newcommand{\be}{\begin{equation}}
\newcommand{\ee}{\end{equation}}
\newcommand{\ba}{\begin{equation}\begin{aligned}}
\newcommand{\ea}{\end{aligned}\end{equation}}

\DeclareMathAlphabet{\mathpzc}{OT1}{pzc}{m}{it}

\usepackage{amsthm}
\let\oldproofname=\proofname
\renewcommand{\proofname}{\rm\bf{\oldproofname}:}
\newtheorem*{theorem}{Theorem}

\newtheorem*{lemma}{Lemma}

\usepackage{mathrsfs}

\begin{document}

\pagestyle{myheadings}

\title{Reflection Symmetry in Higher Dimensional Black Hole Spacetimes}
\author{Joshua S. Schiffrin}
\email{schiffrin@uchicago.edu}
\author{Robert M. Wald}
\email{rmwa@uchicago.edu}
\affiliation{Enrico Fermi Institute and Department of Physics \\
  The University of Chicago \\
  5640 S. Ellis Ave., Chicago, IL 60637, U.S.A.}
\date{\today}

\begin{abstract} 
In 4 spacetime dimensions there is a well known proof that for any asymptotically flat, stationary, and axisymmetric vacuum solution of Einstein's equation there exists a ``$t$-$\phi$'' reflection isometry that reverses the direction of the timelike Killing vector field and the direction of the axial Killing vector field.
However, this proof does not generalize to higher spacetime dimensions.
Here we consider asymptotically flat, stationary, and axisymmetric (i.e., having one or more commuting rotational isometries) black hole spacetimes in vacuum general relativity in $d \geq 4$ spacetime dimensions such that the action of the isometry group is trivial. (Here ``trivial'' means that if the ``axes''---i.e., the points where the axial Killing fields are linearly dependent---are removed, the action of the isometry group is that of a trivial principal fiber bundle. This excludes actions like that found in the Sorkin monopole.) We prove that there exists a ``$t$-$\phi$'' reflection isometry that reverses the direction of the timelike Killing vector field and the direction of each axial Killing vector field. The proof relies in an essential way on the first law of black hole mechanics.
\end{abstract}

\maketitle



\section{Introduction} \label{intro}

A $d$-dimensional spacetime, $(M,g_{ab})$, is said to be {\it stationary and axisymmetric} if its isometry group contains the group 
\be
G \equiv R \times \underbrace{U(1) \times \dots \times U(1)}_{n \text{ times}} \, , 
\label{G}
\ee
where action of $G$ on $M$ is such that the orbits of the $R$-factor are timelike in a suitable region of spacetime (stationarity) and the orbits of each $U(1)$ factor are spacelike (axisymmetry). Equivalently, a stationary-axisymmetric spacetime is one in which there exist Killing vector fields $t^a$, ${\phi_\Lambda}^{a}$ for $\Lambda=1,\dots,n$ that mutually commute, are linearly independent (as Killing fields, but not necessarily as vectors at each point), and are such that the orbits of $t^a$ are timelike and the orbits of each ${\phi_\Lambda}^a$ are spacelike and closed.

For stationary and axisymmetric spacetimes in $4$ spacetime dimensions with one axial Killing field, $\phi^a$, there is a well known result, due to Papapetrou \cite{Papapetrou} and Carter \cite{carter} (see, e.g., section 7.1 of \cite{WaldGR}) that if $g_{ab}$ satisfies the vacuum Einstein equation and $\phi^a$ vanishes at a point, then the timelike Killing field, $t^a$, and the axial Killing field, $\phi^a$, are $2$-surface orthogonal, i.e., the $2$-planes orthogonal to $t^a$ and $\phi^a$ are integrable. This $2$-surface orthogonality allows one to introduce local coordinates $(t,\phi,x^2,x^3)$ adapted to the Killing fields so that the metric components are independent of $(t,\phi)$ and take a ``block diagonal'' form with respect to $(t,\phi)$ and $(x^2, x^3)$. It is then easily seen that the (local) diffeomorphism $(t,\phi) \to (-t, - \phi)$, $(x^2, x^3) \to (x^2, x^3)$ is an isometry. Thus, in $4$ spacetime dimensions, every stationary and axisymmetric solution of the vacuum Einstein equation such that $\phi^a$ vanishes at a point possesses a ``$(t$-$\phi)$-reflection isometry,'' at least locally. In particular, this result holds for all asymptotically flat, stationary, and axisymmetric solutions, since $\phi^a$ must vanish on a ``rotation axis'' in the asymptotically flat region.

However, the standard proof of the surface orthogonality of $t^a$ and $\phi^a$ is highly dependent on the spacetime dimension being $4$. To see this, we note that in $d$ spacetime dimensions, the Frobenius conditions for the local existence of $(d-2)$-dimensional surfaces orthogonal to both $t^a$ and $\phi^a$ can be expressed as
\begin{align}
\phi^{[a}t^b \nabla^c t^{d]}&=0 \label{frobenius1}\\
t^{[a}\phi^b \nabla^c \phi^{d]}&=0 \label{frobenius2}  \,  .
\end{align}
However, only in $4$ dimensions can the first of these equations be written as the scalar equation
\be
\phi^a \omega_a = 0 \, , 
\label{frob4d}
\ee
where
\be
\omega_a \equiv \epsilon_{abcd}t^b \nabla^c t^d
\ee
is the ``twist'' of $t^a$. The second equation, of course, can be similarly expressed. To show that \eqref{frob4d} holds, we use the identity
\be
\nabla_b \left(\phi^a \omega_a\right) = \lie_\phi \omega_b - 2 \phi^a \nabla_{[a} \omega_{b]} \, .
\ee
The first term on the right side of this equation vanishes since $\phi^a$ is a Killing field that commutes with $t^a$. The second term on the right side can be shown to be proportional to the Ricci tensor, and it therefore vanishes by Einstein's equation. Consequently, in $4$-dimensions, we obtain $\nabla_b (\phi^a \omega_a)= 0$. But, by hypothesis, $\phi^a$ vanishes at a point, so \eqref{frob4d}---and equivalently \eqref{frobenius1}---holds. Similarly, in $4$-dimensions, \eqref{frobenius2} holds, thus proving the desired result.

To see why this proof fails in higher dimensions, consider a $5$-dimensional spacetime, but still with just two Killing vector fields $t^a$ and $\phi^a$. The first Frobenius condition \eqref{frobenius1} is now
\be
\phi^a \omega_{ae} = 0 \, ,
\ee
where
\be
\omega_{ae} \equiv \epsilon_{aebcd}t^b \nabla^c t^d \, .
\ee
We again have the identity
\be
d \left(\iota_\phi \boldsymbol\omega \right) = \lie_\phi \boldsymbol\omega - \iota_\phi d\boldsymbol\omega \, ,
\ee
where we have now switched to differential forms notation and $\iota_\phi$ denotes the contraction of $\phi^a$ into the first index of a differential form. It is again true that $\lie_\phi \boldsymbol\omega = 0$ and $d\boldsymbol\omega = 0$, which implies that $d (\iota_\phi \boldsymbol\omega) = 0$. However, the condition $d (\iota_\phi \boldsymbol\omega) = 0$ together with the vanishing of $\iota_\phi \boldsymbol\omega$ at a point (or even, e.g., its vanishing on a higher dimensional surface) does not imply that $\iota_\phi \boldsymbol\omega = 0$, since this quantity is now a one-form, not a scalar, and there are plenty of nonvanishing closed one-forms that vanish at a point.
Thus, in 5 dimensions with one axial Killing vector field, the proof fails.
A direct analog of the proof does work in 5 spacetime dimensions with 2 commuting axial Killing vector fields, and more generally in $d$ spacetime dimensions with $d-3$ commuting axial Killing vector fields \cite{EmparanReall}.
But for asymptotically flat spacetimes in $d>5$ dimensions, one cannot have $d-3$ independent commuting rotations, so in the asymptotically flat case, the above proof or its direct analog works for no cases other than $d=4$ with one axial Killing vector field and $d=5$ with two axial Killing vector fields.

The purpose of this paper is to provide a completely different proof of the surface orthogonality and corresponding reflection symmetry for stationary and axisymmetric vacuum spacetimes. Our proof will hold in arbitrary dimensions with an arbitrary number of commuting axial Killing vector fields. However, our proof will require us to impose the following restrictions that are not needed in the $4$-dimensional proof: (i) We consider only asymptotically flat, black hole spacetimes. (ii) We require the action of the isometry group $G$ on $M$ to be trivial in the sense that $M \setminus \mathcal A = {\mathcal O} \times G$ for some manifold $\mathcal O$, where $\mathcal A$ denotes the set of points (``axes'') where the Killing fields are linearly dependent. Thus, our generalization to higher dimensions of the proof of the existence of a reflection isometry comes at the price of imposing some additional hypotheses.

Our proof parallels the key idea of the proof of Theorem 3.1 of \cite{sudwald} and can be summarized as follows: We start with a maximal (i.e., zero trace of the extrinsic curvature) hypersurface, $\Sigma$, that is asymptotically flat and terminates at the bifurcation surface of the black hole horizon.
(Such a hypersurface exists by the results of \cite{ChruscielWald}.) We show that the axial Killing fields must be tangent to $\Sigma$. A ``$t$-$\phi$'' reflection isometry about $\Sigma$ will exist if and only if (a) the induced metric, $h_{ij}$, of $\Sigma$ is such that there exists a reflection isometry $i_\Sigma : \Sigma \to \Sigma$ that reverses the directions of the axial Killing fields and (b) the extrinsic curvature, $K_{ij}$, of $\Sigma$ reverses sign under the action of $i_\Sigma$. If these conditions hold, then the desired ``$t$-$\phi$'' reflection isometry is obtained by mapping a point $p$ lying at proper time $\tau$ along a normal geodesic starting at $s \in \Sigma$ to the point $q$ lying at proper time $-\tau$ along a normal geodesic starting at $i_\Sigma (s)$. Condition (a) is locally equivalent to the Frobenius conditions holding for the axial Killing fields on $\Sigma$, which, in turn, is equivalent to the vanishing of a curvature $2$-form constructed from the axial Killing fields. Condition (b) is equivalent to the vanishing of the ``polar part'' of the extrinsic curvature. To show that these quantities vanish, we construct a perturbation which scales up the curvature $2$-form and scales up the ``polar part'' of the extrinsic curvature. The trivial action of $G$ (assumption (ii) above) is used here to globally construct the desired perturbation and to ensure that this perturbation is smooth on $\Sigma$. This perturbation satisfies the linearized momentum constraints, but it fails to satisfy the linearized Hamiltonian constraint. We therefore adjust the perturbation with a linearized conformal transformation so that all of the Einstein constraint equations are satisfied. The resulting perturbation can then be easily seen to have $\delta A = \delta J_\Lambda = 0$, where $A$ is the area of the horizon and $J_\Lambda$ is the ADM angular momentum associated with axial Killing vector field ${\phi_\Lambda}^i$. Hence, by the first law of black hole mechanics, the change in ADM mass, $\delta M$, is zero. However, we also prove that $\delta M = 0$ only if the curvature $2$-form and the polar part of $K_{ij}$ vanish. Consequently, we locally obtain the desired reflection isometry. This construction can then be made global using the simply connectedness of the black hole exterior.

In section \ref{assumptions}, we spell out our assumptions about the spacetime and the stationary and axisymmetric symmetries. In section \ref{manifoldoforbits}, we explain how the spatial hypersurface $\Sigma$ (with the axes removed) may be viewed as a principal fiber bundle with group $U(1) \times \dots \times U(1)$. Although our assumptions require this bundle to be trivial, it is very useful to introduce the fiber bundle language to explain how the axial Killing fields (together with the structure obtained from the spatial metric, $h_{ij}$) give rise to a connection on this bundle and to introduce the curvature of this connection. In section \ref{constraintssec}, we write the constraint equations as equations on the ``manifold of orbits,'' i.e., the base space of this fiber bundle. In section \ref{conformalsection}, we construct the perturbation that will be used in our proof. In section \ref{proofsec}, we state and prove our reflection isometry theorem. Finally, we briefly mention some possible generalizations of our results in section \ref{gensec}. In Appendix \ref{axisappendix}, we derive the axis regularity conditions that follow from the smoothness of the metric under our assumption of trivial group action.

Our index notational conventions are as follows. Lower case Latin indices from the early alphabet ($a,b,c,\dots$) will denote abstract spacetime indices, whereas lower case Latin indices from mid-alphabet ($i,j,k,\dots$) will denote abstract spatial indices. However, we will not distinguish notationally between spatial tensors on a hypersurface $\Sigma$ and their projection to the manifold of Killing orbits, $\mathcal O$. 
Lower case Greek indices ($\mu,\nu,\dots$) will denote coordinate labels, components, or basis elements. Upper case Greek indices ($\Lambda, \Gamma, \dots$) will be used to enumerate the axial Killing fields ${\phi_\Lambda}^a$ and corresponding quantities, such as the angular momenta, $J_\Lambda$, and the angular coordinates, $\varphi^\Lambda$. Upper case Latin indices will be used to denote tensors over the abstract Lie algebra, $\mathcal V$, of the group of symmetries. Thus, ${\phi_A}^a$ denotes the collection of Killing fields $\{{\phi_\Lambda}^a\}$ when viewed as a vector field on spacetime that is valued in the dual of $\mathcal V$.

\section{Assumptions}\label{assumptions}

Let $(M,g_{ab})$ be a $d$-dimensional spacetime. As defined in section \ref{intro}, a stationary, axisymmetric spacetime is one in which the isometry group contains a subgroup of the form \eqref{G}, where the orbits of the $R$-factor are timelike in a suitable region of spacetime (stationarity) and the orbits of each $U(1)$ factor are spacelike (axisymmetry). Let $t^a$, ${\phi_\Lambda}^{a}$ for $\Lambda=1,\dots,n$ denote the corresponding Killing fields. For reasons that will become clear when we give our proof, the results of this paper will apply only to stationary and axisymmetric spacetimes such that the action of $G$ is trivial in the following sense: Let $\mathcal A$ denote the set of points at which the orbits of $G$ fail to be $(n+1)$-dimensional; equivalently, $\mathcal A$ is the set of points at which the Killing fields $t^a$, ${\phi_\Lambda}^a$ fail to be linearly independent. Let $\widetilde{M} = M \setminus {\mathcal A}$. Since a non-zero Killing field that vanishes at a point cannot have vanishing derivative at that point, it follows immediately that $\widetilde{M}$ is an open, dense subset of $M$, so, in particular, $\widetilde{M}$ itself is a manifold. We say that {\em the action of $G$ on $M$ is trivial} if $\widetilde{M}$ can be written in the form $\widetilde{M} = {\mathcal O} \times G$, where ${\mathcal O}$ is a $(d-n-1)$-dimensional manifold, such that the action of the isometry group $G$ on $\widetilde{M}$ is given by $\psi_{g'} (p, g) = (p, g'g)$, where $p \in \mathcal O$ and  $g,g' \in G$. We refer to $\mathcal O$ as the manifold of orbits of $G$. 

There are two distinct ways by which the assumption of a trivial action of $G$ could fail. The first is that, even though by construction the Killing fields are linearly independent at each point of $\widetilde{M}$, there could exist a discrete subgroup of $G$ that has fixed points on $\widetilde{M}$; see, e.g., \cite{Orlik} for examples. This would give the ``manifold of orbits'' the structure of an orbifold rather than a manifold. 

The second way our assumption of a trivial $G$ action could fail is for $\widetilde{M}$ to fail to be of the form ${\mathcal O} \times G$. In the language of fiber bundles (see section \ref{manifoldoforbits}), $\widetilde{M}$ would be a principal fiber bundle with structure group $G$, but it would not be a trivial bundle. This type of behavior is displayed in the Sorkin monopole \cite{Sorkin1983,GrossPerry}. An even simpler example is given by the metric\footnote{This metric, of course, is not a solution to Einstein's equation.}
\be\label{sorkinmonopole}
ds^2 = -dt^2 + \left(1 + \frac{xz + yw}{1+r^4}\right)[dx^2 + dy^2 +dz^2 + dw^2] \, ,
\ee
where $r^2 = x^2 + y^2+ z^2 + w^2$. This metric is asymptotically flat and has Killing fields $t^a = (\partial/\partial t)^a$ and
\be
\phi^a = x \left(\frac{\partial}{\partial y}\right)^a - y \left(\frac{\partial}{\partial x}\right)^a + z \left(\frac{\partial}{\partial w}\right)^a
- w \left(\frac{\partial}{\partial z}\right)^a \, ,
\label{ntphi}
\ee
which can be seen to be a sum of rotations in the $x$-$y$ and $z$-$w$ planes. This axial Killing field $\phi^a$ vanishes only at the origin $x=y=z=w=0$, so $\widetilde{M} = R \times (R^4 - \{ 0 \}) = R^2 \times S^3$. Thus, $\widetilde{M}$ is simply connected, whereas any manifold of the form ${\mathcal O} \times G$ with $G=R \times U(1)$ cannot be simply connected, so $\widetilde{M}$ cannot be of the form ${\mathcal O} \times G$.

We see no obvious reason why the action of $G$ for a stationary, axisymmetric black hole in $5$ or higher dimensions could not be as in the above example. Thus, our assumption of a trivial action of $G$ appears to be a genuine restriction. 

Let $(t, \varphi^\Lambda)$ be the natural coordinates on $G$---so  $-\infty < t < \infty$ and each $\varphi^\Lambda$ is periodic with period $2 \pi$---with group multiplication corresponding to addition. Let $x^\mu$ denote (local) coordinates on $\mathcal O$. Our assumption of a trivial action of $G$ on $M$ implies that on $\widetilde{M}$ we can {\em globally} choose coordinates $(t, \varphi^\Lambda)$ and (locally) choose coordinates $x^\mu$ so that on $\widetilde{M}$ the metric, $g_{ab}$, takes the form
\be
ds^2 = - \alpha dt^2 + 2 \beta_\Lambda dt d\varphi^\Lambda + \Phi_{\Lambda \Gamma} d\varphi^\Lambda  d\varphi^\Gamma + 2 B_\mu dt dx^\mu 
+ 2 A_{\Lambda \mu} d \varphi^\Lambda d x^\mu + \lambda_{\mu \nu} dx^\mu dx^\nu \, ,
\label{metricform}
\ee
where the metric components $(\alpha, \beta_\Lambda, \Phi_{\Lambda\Gamma}, B_\mu, A_{\Lambda \mu}, \lambda_{\mu \nu})$ do not depend upon $(t, \varphi^\Lambda)$. In these coordinates, the Killing fields are $t^a = (\partial/\partial t)^a$ and ${\phi_\Lambda}^a = (\partial/\partial \varphi^\Lambda)^a$. The coordinates $(t, \varphi^\Lambda)$ on $\widetilde{M}$ are unique up to $t \to t + f(x^\mu)$, $\varphi^\Lambda \to \varphi^\Lambda + f^\Lambda (x^\mu)$. The diffeomorphism defined by $(t, \varphi^\Lambda, x^\mu) \to (-t, -\varphi^\Lambda, x^\mu)$ will be an isometry of the metric \eqref{metricform} if and only $B_\mu = A_{\Lambda \mu} = 0$. Thus, one way of formulating our main goal is to show that we can set $B_\mu = A_{\Lambda \mu} = 0$ by making use of the coordinate freedom in $(t, \varphi^\Lambda)$.

The above coordinates break down at the ``axes,'' $\mathcal A$, where the Killing fields become linearly dependent. In Appendix \ref{axisappendix}, we show that under the assumption of a trivial action of $G$, the following {\em axis regularity conditions} hold: Coordinates $(t, \varphi^\Lambda)$ for the metric form \eqref{metricform} can be chosen on $\widetilde{M}$ so that (i) On any $t = {\rm const}$, $\varphi^\Lambda = {\rm const}$ surface, the metric $\lambda_{\mu \nu} dx^\mu dx^\nu$ smoothly extends to $\mathcal A$. (ii) The scalar fields $\Phi^{\Lambda \Gamma} \beta_\Gamma$ and one-form fields $\Phi^{\Lambda \Gamma} A_{\Gamma \mu} dx^\mu$ smoothly extend to $M$, where $\Phi^{\Lambda \Gamma}$ denotes the inverse of $\Phi_{\Lambda \Gamma} = {\phi_\Lambda}^a \phi_{\Gamma a}$. (iii) The one-form field $B_\mu dx^\mu$ and the tensor field $\Phi_{\Lambda \Gamma} d\varphi^\Lambda d\varphi^\Gamma + \lambda_{\mu \nu} dx^\mu dx^\nu$ smoothly extend to $M$.

In addition to our above assumption on the nature of the group action, we require that $(M, g_{ab})$ be an asymptotically flat, vacuum\footnote{For simplicity, we restrict consideration to vacuum general relativity in this paper, but as discussed in section \ref{gensec}, significant generalizations should be possible.} solution of Einstein's equation that contains a strongly asymptotically predictable black hole with a non-degenerate horizon. As is well known, the rigidity theorem \cite{MI, HIW} implies that the event horizon of a stationary black hole with a non-degenerate horizon must be a Killing horizon. Since the surface gravity, $\kappa$, of the event horizon is nonvanishing, the event horizon is (a portion of) a bifurcate Killing horizon. The rigidity theorem further states that the Killing field, $k^a$, that is normal to the horizon takes the form
\be
k^a = t^a + \sum_\Lambda \Omega^\Lambda {\phi_\Lambda}^a \, ,
\label{horkvf}
\ee
where each ${\phi_\Lambda}^a$ is an axial Killing field and these axial Killing fields mutually commute with each other and with $t^a$. The coefficient $\Omega^\Lambda$ defines the angular velocity of the horizon with respect to the Killing field ${\phi_\Lambda}^a$. We note that an arbitrary, smooth, asymptotically flat perturbation of $(M, g_{ab})$ satisfies the first law of black hole mechanics
\be
\delta M = \frac{1}{8 \pi} \kappa \delta A + \sum_\Lambda \Omega^\Lambda \delta J_\Lambda \, ,
\label{1stlaw}
\ee
where $M$ is the ADM mass, $A$ is the area of the event horizon, and $J_\Lambda$ are the ADM angular momenta. This relation will play a crucial role in our proof. An additional fact that we will need in our proof is that, by the topological censorship theorem \cite{Friedmanetal, ChruscielGalloway}, the domain of outer communications is simply connected.

The first key result that we shall use is the existence \cite{ChruscielWald} of an asymptotically flat Cauchy surface $\Sigma$ for the domain of outer communications that terminates at the bifurcation surface, $\mathcal B$, and is maximal in the sense that ${K^i}_i=0$, where $K_{ij}$ denotes the extrinsic curvature of $\Sigma$. Furthermore, $t^a$ is transverse to $\Sigma$, so by acting on $\Sigma$ with the time translation isometries, we obtain a foliation, $\Sigma_t$, of maximal Cauchy surfaces. Note that the results of \cite{ChruscielWald} assumed $d=4$ spacetime dimensions, but their proofs do not, in fact, depend upon the number of dimensions. 

We claim now that any axial Killing field ${\phi_\Lambda}^a$ must be tangent to each $\Sigma_t$. To prove this, let $n^a$ denote the unit normal to $\Sigma_t$, and let 
\be
N_\Lambda = - {\phi_\Lambda}^a n_a
\ee
be the ``lapse function'' associated with ${\phi_\Lambda}^a$. Since ${\phi_\Lambda}^a$ is a Killing field, the surface $\Sigma_t'$ obtained by displacing $\Sigma_t$ along ${\phi_\Lambda}^a$ must also be maximal. Therefore, $N_\Lambda$ must satisfy (see \cite{BrillFlaherty})
\be
-D^a D_a N_\Lambda +  K^{ab} K_{ab}N_\Lambda= 0
\ee
with the conditions $N_\Lambda \to 0$ at infinity (since ${\phi_\Lambda}^a$ must become asymptotically tangent to $\Sigma_t$) and $N_\Lambda=0$ at $\mathcal B$ (since $\mathcal B$ must be mapped into itself by any isometry, so ${\phi_\Lambda}^a$ must be tangent to $\mathcal B$). It follows immediately \cite{CantorBrill, ChruscielDelay} that $N_\Lambda=0$, as we desired to show.

Since $t^a$ is transverse to $\Sigma$ and all ${\phi_\Lambda}^a$ are tangent to $\Sigma_t$, it is clear that $t^a$ is linearly independent of ${\phi_\Lambda}^a$ in the domain of outer communications. Furthermore, we may use $t$ as a time coordinate in a coordinate system on $\widetilde{M}$ of the type \eqref{metricform}. This justifies the assumptions concerning $t^a$ and $t$ that are made in Appendix \ref{axisappendix}.

As explained in section \ref{intro}, our strategy is to find a map $i_\Sigma: \Sigma \to \Sigma$ such that $i_\Sigma^* (h_{ij}) = h_{ij}$, 
$i_\Sigma^* ({\phi_\Lambda}^i) = -{\phi_\Lambda}^i$, and $i_\Sigma^* (K_{ij}) = - K_{ij}$, where $h_{ij}$ is the induced metric on $\Sigma$, and $K_{ij}$ is the extrinsic curvature of $\Sigma$. It should be noted that, since we are now viewing the axial Killing fields as vector fields on $\Sigma$ rather than $M$, we have denoted them as ${\phi_\Lambda}^i$ rather than ${\phi_\Lambda}^a$. Given such an $i_\Sigma$, the desired ``$t$-$\phi$'' reflection isometry, $i$, is obtained by mapping a point $p$ lying at proper time $\tau$ along a normal geodesic starting at $s \in \Sigma$ to the point $q$ lying at proper time $-\tau$ along a normal geodesic starting at $i_\Sigma (s)$. Since $i$ maps the initial data, $(h_{ij}, K_{ij})$, on $\Sigma$ into itself, by uniqueness of Cauchy evolution, $i$ is an isometry.

The induced spatial metric, $h_{ij}$, on $\Sigma$ takes the form
\be
ds^2 = \Phi_{\Lambda \Gamma} d\varphi^\Lambda  d\varphi^\Gamma + 2 A_{\Lambda \mu} d \varphi^\Lambda d x^\mu + \lambda_{\mu \nu} dx^\mu dx^\nu \, .
\label{metform2}
\ee
It will possess a reflection isometry, $i_\Sigma$, of the form $(\varphi^\Lambda, x^\mu) \to (-\varphi^\Lambda, x^\mu)$ if and only if $A_{\Lambda \mu}$ can be set to zero after redefinitions of the form $\varphi^\Lambda \to \varphi^\Lambda + f^\Lambda(x^\mu)$. The resulting isometry $i_\Sigma$
will map ${\phi_\Lambda}^i \to -{\phi_\Lambda}^i$ and it will leave invariant every axisymmetric vector field that is orthogonal to all ${\phi_\Lambda}^i$. 

We can uniquely decompose any tensor field ${T^{ij\dots}}_{kl\dots}$ on $\Sigma$ into its parts parallel and perpendicular to ${\phi_\Lambda}^i$. For example, any vector field $W^i$, on $\Sigma$ can be uniquely written as
\be
W^i =  w^i + W^\Lambda {\phi_\Lambda}^i \, ,
\ee
with $w^i$ orthogonal to each ${\phi_\Lambda}^{i}$. Note that the quantities $W^\Lambda$ are scalar fields on $\Sigma$. The extrinsic curvature can be decomposed as 
\be \label{Kdecomp}
K_{ij} = k_{ij} + K^{\Lambda\Gamma}  \phi_{\Lambda i} \phi_{\Gamma j}  + 2  {\mathcal K^\Lambda}_{(i}\phi_{|\Lambda| j)}  \, ,
\ee
where $k_{ij} {\phi_\Lambda}^i=0$ and ${\mathcal K^\Lambda}_i {\phi_\Gamma}^i=0$.
For any axisymmetric tensor field ${T^{ij\dots}}_{kl\dots}$ on $\Sigma$ we define its {\em polar part} to be the terms in such a decomposition that contain an even number of ${\phi_\Lambda}^i$, and we define its {\em axial part} to be the terms that contain an odd number of ${\phi_\Lambda}^i$. Thus, for an axisymmetric vector field $W^i$, we call $w^i$ its ``polar part'' and $W^\Lambda {\phi_\Lambda}^{i}$ its ``axial part''. For the extrinsic curvature, $K_{ij}$, the first two terms in \eqref{Kdecomp} are its polar part whereas the last term is its axial part. The extrinsic curvature will reverse sign under $i_\Sigma$ if and only if its polar part vanishes.

Thus, the existence of a $(t$-$\phi)$-reflection isometry will be proven if we can show that $A_{\Lambda \mu}$ can be set to zero and that the polar part of $K_{ij}$ vanishes. Before proceeding to show this, it is useful to reformulate this question more geometrically, using the notion of fiber bundles.

\section{Fiber Bundle Structure} \label{manifoldoforbits}

As shown in the previous section, the axial Killing fields are tangent to the maximal hypersurface $\Sigma$, and the Killing field $t^a$ is transverse to $\Sigma$. Thus, when restricted to $\Sigma$, the spacetime isometry group $G$ reduces to 
\be
G_\Sigma = [U(1)]^n \, .
\ee
The trivial action of $G$ on $M$ implies that $G_\Sigma$ has a correspondingly trivial action on $\Sigma$. Specifically, $\widetilde{\Sigma} \equiv \widetilde{M} \cap \Sigma$ is an open, dense subset of $\Sigma$ that has the structure $\widetilde{\Sigma} = {\mathcal O} \times G_\Sigma$, where $\mathcal O$ is the manifold of orbits of $G$ (see the first paragraph of section \ref{assumptions} above). This trivial action of $G_\Sigma$ gives $\widetilde{\Sigma}$ the structure of a trivial principal fiber bundle with fiber group $G_\Sigma$ and base space $\mathcal O$. Although there is no necessity for introducing fiber bundle machinery to describe structures on a cross-product space, it is very convenient to do so in order to describe in an invariant, geometrical manner the various quantities that arise. It is also useful to do so in order to draw parallels with calculations in Yang-Mills theory, and allow generalizations (not undertaken here) to nontrivial actions of $G_\Sigma$. The constructions below may be viewed as a generalization of \cite{Geroch, Geroch2} to the case of $n$ commuting axial Killing vector fields, but specialized to the Riemannian signature of $h_{ij}$. 

Let $\mathcal V$ denote the Lie algebra of $G_\Sigma$. Then $\mathcal V$ is an $n$-dimensional vector space with trivial Lie bracket. We denote elements of $\mathcal V$ using an upper case Latin upper index, e.g., $v^A \in \mathcal V$. There is a natural, one-to-one correspondence between $\mathcal V$ and Killing fields on $\widetilde{\Sigma}$. Consequently, the tangent space to the ``fibers'' (i.e., the orbits of $G_\Sigma$) at each point $q \in \widetilde{\Sigma}$ are thereby naturally isomorphic to $\mathcal V$. Our original collection of axial Killing fields ${\phi_\Lambda}^i$, $\Lambda = 1, \dots, n$, may therefore be viewed as being associated with a particular basis, ${v_\Lambda}^A$, $\Lambda = 1, \dots, n$, of the Lie algebra $\mathcal V$. It is therefore natural to view the axial Killing fields as comprising a single object, ${\phi_A}^i$, i.e., a vector field on $\widetilde{\Sigma}$ that is valued in the dual space to the Lie algebra $\mathcal V$. The Killing field ${\phi_\Lambda}^i$ may then be recovered by contraction with the basis vector ${v_\Lambda}^A$, i.e.,
\be
{\phi_\Lambda}^i = {v_\Lambda}^A {\phi_A}^i \, .
\ee

We will be interested in the tensor fields on $\widetilde{\Sigma}$ that are axisymmetric in the sense of having vanishing Lie derivative with respect to ${\phi_A}^i$; such tensor fields are called {\em equivariant} in the fiber bundle terminology. Clearly, any tensor field on $\widetilde{\Sigma}$ that is constructed from $h_{ij}$ and ${\phi_A}^i$ will be axisymmetric. Any axisymmetric tensor field ${T^{ij\dots}}_{kl\dots}$ on $\widetilde{\Sigma}$ that is orthogonal to ${\phi_A}^i$ in all of its (spatial) indices can be projected to the base space $\mathcal O$. Our strategy is to do this projection and formulate all relations as relations holding on $\mathcal O$. We will not make any notational distinction between an axisymmetric tensor field ${T^{ij\dots}}_{kl\dots}$ on $\widetilde{\Sigma}$ that is orthogonal to ${\phi_A}^i$ and its projection to $\mathcal O$, i.e., we will also denote its projection to $\mathcal O$ by ${T^{ij\dots}}_{kl\dots}$.

Two important examples of tensor fields on $\mathcal O$ that are obtained by projection in this manner are as follows. First, let
\be
\Phi_{AB}\equiv  h_{ij} {\phi_A}^i {\phi_B}^j \, .
\ee
Clearly $\Phi_{AB}$ is axisymmetric and, since it is a scalar quantity (valued in the space of tensors of type $(0,2)$ over $\mathcal V$), it is well defined on $\mathcal O$. Indeed, since $h_{ij}$ is positive definite, it follows that $\Phi_{AB}$ yields a positive definite metric on $\mathcal V$ (which, of course, depends upon the point $p \in \mathcal O$). The components, $\Phi_{\Lambda \Sigma}$, of $\Phi_{AB}$ have already appeared in the metric form \eqref{metform2}.

Second, let
\be \label{metric}
\mu_{ij} = h_{ij} - \Phi^{AB} \phi_{Ai} \phi_{Bj} \, ,
\ee
where $\Phi^{AB}$ is the inverse metric of $\Phi_{AB}$ and we have lowered the spatial index of ${\phi_A}^i$ using $h_{ij}$. It follows immediately that $\mu_{ij} = \mu_{ji}$ and that $\mu_{ij}$ is axisymmetric. It also is easily verified that $\mu_{ij} {\phi_A}^i = 0$, so $\mu_{ij}$ projects to $\mathcal O$, where it defines a Riemannian metric on $\mathcal O$. The components, $\mu_{\mu \nu}$, of $\mu_{ij}$ are related to the quantities appearing in \eqref{metform2} by
\be
\mu_{\mu \nu} = \lambda_{\mu \nu} - \Phi^{\Lambda \Gamma} A_{\Lambda \mu} A_{\Gamma \nu} \, .
\ee

Now, consider the quantity
\be
{\phi^A}_i = \Phi^{AB} h_{ij} {\phi_B}^j
\ee
obtained from the Killing fields ${\phi_A}^i$ by raising the Lie algebra index with $\Phi^{AB}$ and lowering the spatial index with $h_{ij}$. Then, obviously, ${\phi^A}_i$ is a Lie-algebra-valued one-form on $\widetilde{\Sigma}$. Furthermore, this one-form satisfies,
\be
{\phi^A}_i {\phi_B}^i = \Phi^{AC} h_{ij} {\phi_C}^j {\phi_B}^i = \Phi^{AC} \Phi_{CB} = {\delta^A}_B \, .
\label{phiprop}
\ee
It follows that ${\phi^A}_i$ maps an arbitrary ``vertical vector'' $v^i$---i.e., a vector $v^i$ that is tangent to the fibers and thus can be written as a linear combination of ${\phi_\Lambda}^i$---into the corresponding vector $v^A$ in the Lie algebra $\mathcal V$. By definition, a connection is an equivariant Lie algebra valued one-form with this property. {\em Thus, ${\phi^A}_i$ defines a connection on the principal fiber bundle $\widetilde{\Sigma}$.} 

We now use the assumed trivial action of $G_\Sigma$ to encode the information contained in ${\phi^A}_i$ as a field ${A^A}_i$ on $\mathcal O$. Choose a cross-section, $\mathcal S$, of $\widetilde{\Sigma}$. Then all of the nontrivial information contained in ${\phi^A}_i$ is contained in the pullback of ${\phi^A}_i$ to $\mathcal S$. Let ${A^A}_i$ denote the equivariant (i.e., axisymmetric) one-form field on $\widetilde{\Sigma}$ whose pullback to $\mathcal S$ agrees with the pullback of ${\phi^A}_i$ to $\mathcal S$ and is such that ${A^A}_i {\phi_B}^i = 0$. The quantities $A_{\Lambda \mu}$ appearing in \eqref{metform2} are just the components of $A_{Ai} = \Phi_{AB} {A^B}_i$. As shown in Appendix \ref{axisappendix}, a cross-section $\mathcal S$ can be chosen so that ${A^A}_i$ smoothly extends to $\Sigma$, and we assume that such a choice of cross-section has been made. Finally, since ${A^A}_i {\phi_B}^i = 0$, we may view ${A^A}_i$ as a $\mathcal V$-valued one-form field on $\mathcal O$. However, unlike $\Phi_{AB}$ and $\mu_{ij}$, the quantity ${A^A}_i$ is ``gauge dependent'' in that it depends upon a choice of cross-section $\mathcal S$.

From the above, we see that all of the information contained in the coordinate components \eqref{metform2} of an axisymmetric Riemannian metric, $h_{ij}$, on $\widetilde{\Sigma}$ is encoded in the following 3 objects: (1) a scalar field $\Phi_{AB}$ on $\mathcal O$ valued in Riemannian metrics on $\mathcal V$; (2) a Riemannian metric $\mu_{ij}$ on $\mathcal O$; and (3) a $\mathcal V$-valued one-form field ${A^A}_i$ on $\mathcal O$. Conversely, if one specifies the Killing fields ${\phi_A}^i$ on $\widetilde{\Sigma}$ and the cross-section $\mathcal S$, then any choice of a field $\Phi_{AB}$ on $\mathcal O$ that is valued in Riemannian metrics on $\mathcal V$, a Riemannian metric $\mu_{ij}$ on $\mathcal O$, and a one-form field ${A^A}_i$ on $\mathcal O$ uniquely determines an axisymmetric Riemannian metric $h_{ij}$ on $\widetilde{\Sigma}$.

The \emph{the curvature 2-form of ${\phi^A}_i$} is defined by
\be
{F^A}_{ij} = D_i {\phi^A}_j - D_j {\phi^A}_i 
\ee
where $D_i$ denotes the metric compatible derivative operator on $\widetilde{\Sigma}$. Of course, the antisymmetrized derivative of a differential form is independent of the choice of derivative operator, so we may write the definition of ${F^A}_{ij}$ in differential forms notation as
\be \label{F1}
\boldsymbol{F}^A = d \boldsymbol{\phi}^A \, .
\ee
As is well known, the curvature $\boldsymbol{F}^A$ is ``horizontal," i.e. it is orthogonal to the Killing fields ${\phi_A}^i$, and thus may be projected to $\mathcal O$. To see this, we note the the general formula for the Lie derivative of a differential form yields
\be
\iota_{\phi_A} \boldsymbol{F}^B = \pounds_{\phi_A} \boldsymbol\phi^B - d (\iota_{\phi_A} \boldsymbol\phi^B) \, ,
\ee
where $\iota_{\phi_A}$ denotes the contraction of ${\phi_A}^i$ into the first index of a differential form. The first term on the right side vanishes by axisymmetry, and the second term vanishes because ${\phi_A}^i {\phi^B}_i = {\delta^B}_A$, which has vanishing derivative. Thus, ${\phi_A}^i {F^B}_{ij} = 0$, as we desired to show. Finally, we note that we also have
\be
\boldsymbol{F}^A = d \boldsymbol{A}^A \, .
\label{F2}
\ee
To show this, we note that the pullbacks of both sides to $\mathcal S$ agree, since ``$d$'' commutes with pullbacks. However, we have just shown that the left side is orthogonal to ${\phi_A}^i$ and a similar calculation shows that the right side is also orthogonal to ${\phi_A}^i$. Thus, \eqref{F2} holds. Note that since, by the axis regularity conditions, ${A^A}_i$ can be smoothly extended to $\Sigma$, it follows that ${F^A}_{ij}$ also can be smoothly extended to $\Sigma$ and \eqref{F2} holds everywhere on $\Sigma$.

The Frobenius condition for (local) surface orthogonality of the span of the Killing fields $\{ {\phi_\Lambda}^i \}$ (i.e., surface orthogonality of the fibers) is that there exist one-forms ${w^A}_{Bi}$ such that
\be \label{dphi2}
d \boldsymbol \phi_A = \boldsymbol\phi_B \wedge {\boldsymbol w^B}_{A} \, . 
\ee
On the other hand, we have
\be \label{dphi}
d \boldsymbol \phi_A = d (\Phi_{AB} \boldsymbol \phi^B) = \Phi_{AB} \boldsymbol{F}^B +  (d\Phi_{AB}) \wedge \boldsymbol \phi^B \, .
\ee
Since $\boldsymbol{F}^B$ is horizontal, we see that the necessary and sufficient condition for (local) surface orthogonality is simply $\boldsymbol{F}^B = 0$.

It is useful to define the derivative operator, $\mathcal D_i$, on $\mathcal O$ associated with $\mu_{ij}$ as follows: If ${T^{i_1 i_2 \dots}}_{j_1 j_2 \dots}$ is a tensor field on $\widetilde\Sigma$ that projects to $\mathcal O$, define
\be
\mathcal D_k {T^{i_1 i_2 \dots}}_{j_1 j_2 \dots} = \left({\mu^{i_1}}_{l_1}{\mu^{i_2}}_{l_2}\cdots\right)  \left({\mu_{j_1}}^{m_1}{\mu_{j_2}}^{m_2}\cdots \right) {\mu_{k}}^{n}D_n {T^{l_1 l_2 \dots}}_{m_1 m_2 \dots} \, .
\ee
This relationship between the derivative operator of $\mu_{ij}$ and the derivative operator of $h_{ij}$ can be straightforwardly used to calculate the Riemann curvature of $\mu_{ij}$, denoted $\mathcal R_{ijkl}$, in terms of the curvature of $h_{ij}$, denoted\footnote{Normally, one would use ${}^{(d-1)}R_{ijkl}$ to denote the Riemann curvature of $h_{ij}$, and reserve the symbol $R_{abcd}$ for the Riemann curvature of the spacetime metric, $g_{ab}$. However we will not have occasion here to use the curvature of $g_{ab}$, so we will drop the $(d-1)$ superscript on the curvature of $h_{ab}$. } $R_{ijkl}$, and derivatives of the axial Killing vector fields.
This calculation is performed in \cite{Geroch}, and the result is
\be
\mathcal R_{i j k l} = {\mu_{[i}}^{p} {\mu_{j]}}^{q} {\mu_{[k}}^{r} {\mu_{l]}}^{s} \biggl( R_{pqrs} + 2 \Phi^{AB}\Bigl[ (D_{p}\phi_{Aq})(D_{r}\phi_{Bs})+(D_{p}\phi_{Ar})(D_{q}\phi_{Bs})\Bigr] \biggr) \, .
\ee  
By performing the necessary contractions, one finds the scalar curvature of $\mu_{ij}$ to be given by
\be \label{mathcalR}
\mathcal R = \mu^{ik} \mu^{jl} R_{ijkl} + 3 \Phi^{AB} \mu^{ij} \mu^{kl}(D_i\phi_{Ak})(D_j\phi_{Bl}) \, .
\ee
This equation will be useful in writing the constraint equations as equations on $\mathcal O$, which we do in the next section.

\section{The Constraints} \label{constraintssec}

In this section, we rewrite the Einstein constraint equations on $\Sigma$ for axisymmetric initial data $(h_{ij}, K_{ij})$, as equations on the manifold of orbits $\mathcal O$ in terms of $\Phi_{AB}$, $\mu_{ij}$, ${F^A}_{ij}$, and the axial and polar parts of the extrinsic curvature. Since ${K^i}_i = 0$ on $\Sigma$, the constraint equations are
\begin{align}
&D^i K_{ij} = 0 \label{mconstraint}\\
&R-K^{ij}K_{ij}=0 \label{Hconstraint} \, .
\end{align}
We will now write these equations entirely in terms of the above tensor fields on $\mathcal O$. To do so, we will make frequent use of \eqref{dphi}, which, since ${\phi_A}^i$ are Killing fields, can be written in the form
\be \label{Dphi}
D_i \phi_{Aj} = \frac{1}{2} \Phi_{AB}{F^B}_{ij} - \Phi^{BC}\phi_{C[i}\mathcal D_{j]} \Phi_{AB} \, .
\ee
Contracting both sides with ${\phi_D}^i$ yields another useful identity,
\be \label{DPhi}
{\phi_D}^i D_i \phi_{Aj} = -\frac{1}{2} \mathcal D_j \Phi_{AD} \, .
\ee

First, we perform the polar-axial decomposition of $K_{ij}$ via \eqref{Kdecomp}. Since $K_{ij}$ is axisymmetric, it immediately follows that $k_{ij}$, $K^{AB}$, and ${\mathcal K^A}_i$ can be projected to tensor fields on $\mathcal O$ (of symmetric tensor, scalar, and dual-vector type, respectively).

The momentum constraint, \eqref{mconstraint}, can be decomposed by projecting its free index either along ${\phi_A}^i$ or orthogonal to ${\phi_A}^i$.
Doing the former, we obtain as the ``axial part'' of the momentum constraint
\ba
0 &= {\phi_A}^{j} D^i K_{ij} =  D^i \left({\phi_A}^{j}K_{ij}\right) = D^i \mathcal K_{A i} \, , 
\ea
where Killing's equation was used in the second equality, and the decomposition \eqref{Kdecomp} together with the axisymmetry of ${K_A}^B$ was used in the third equality. Writing $D^i \mathcal K_{A i} = h^{ij} D_j  {\mathcal K}_{A i}$, substituting for $h^{ij}$ from \eqref{metric}, and using the definition of $\mathcal D_i$, the axial part of the momentum constraint becomes
\be
0 = \mathcal D^i \mathcal K_{Ai} + \Phi^{BC} {\phi_B}^{i} {\phi_C}^{j}D_j \mathcal K_{Ai} = \mathcal D^i \mathcal K_{Ai} - \Phi^{BC}\mathcal K_{Ai}   {\phi_C}^{j}D_j{\phi_B}^{i} \, .
\ee
On the other hand, from \eqref{DPhi} we have
\be
\Phi^{BC} {\phi_C}^j D_j {\phi_B}^i = - \frac{1}{2} \Phi^{BC} \mathcal D^i \Phi_{BC} = - \frac{1}{\sqrt{\Phi}} {\mathcal D}^i \sqrt{\Phi} \, ,
\ee
where $\Phi$ denotes the determinant\footnote{To define the determinant of $\Phi_{AB}$, we must choose an arbitrary fixed (i.e., $x$-independent) nonvanishing antisymmetric tensor $\epsilon^{A_1 \dots A_n}$ on the Lie algebra $\mathcal V$. A different choice of $\epsilon^{A_1 \dots A_n}$ will rescale the determinant by a constant factor, which will cancel out in all of our formulas.} of $\Phi_{AB}$. Putting this all together, we find that the axial part of the momentum constraint is
\be \label{momentumconstraint1}
\frac{1}{\sqrt{\Phi}}\mathcal D^i \left(\sqrt{\Phi}\, \mathcal K_{Ai}\right) = 0 \, .
\ee

The polar part of the momentum constraint is
\be \label{polarconstraint1}
0 = {\mu_j}^k D^i K_{ik} =  {\mu_j}^k D^i k_{ik} +{\mu_j}^k K^{AB}  \phi_{Ai} D^i   \phi_{Bk}
+{\mu_j}^k  {\mathcal K^A}_i D^i \phi_{Ak}+{\mu_j}^k    \phi_{Ai} D^i  {\mathcal K^A}_k  \, .
\ee
Performing manipulations similar to those used in the derivation of \eqref{momentumconstraint1} above, we find that the first term on the right side takes the form
\be\label{polarconstraint0}
{\mu_j}^k D^i k_{ik} = \frac{1}{\sqrt{\Phi}}\mathcal D^i \left(\sqrt{\Phi}\, k_{ij}\right) \, .
\ee
Using the axisymmetry of ${\mathcal K^A}_i$, the fourth term on the right side of \eqref{polarconstraint1} can be written as
\be
\mu_{jk}    {\phi_A}^i D_i  \mathcal K^{Ak} = \mu_{jk} \mathcal K^{Ai}    D_i  {\phi_A}^k  \, .
\ee
We may now use \eqref{Dphi} to eliminate the derivatives of ${\phi_A}^i$ in the last three terms of \eqref{polarconstraint1}, which can then be combined with \eqref{polarconstraint0} to yield
\be \label{momentumconstraint2}
\frac{1}{\sqrt{\Phi}}\mathcal D^i \left(\sqrt{\Phi}\, k_{ij}\right) -\frac{1}{2} K^{AB} \mathcal D_j \Phi_{AB} + {\mathcal K_A}^{i}  {F^A}_{ij} = 0 \, .
\ee

Finally, we turn our attention to the Hamiltonian constraint, \eqref{Hconstraint}.
The scalar curvature of $h_{ij}$ can be written
\be \label{scalarR1}
R = h^{ik} h^{jl} R_{ijkl} = \mu^{ik}\mu^{jl}R_{ijkl} + \left(2   \mu^{jl} + \Phi^{CD}  {\phi_C}^{j} {\phi_D}^{l} \right)\Phi^{AB}{\phi_A}^{i} {\phi_B}^{k}R_{ijkl} \, .
\ee
By \eqref{mathcalR} and \eqref{Dphi}, the first term on the right side is
\be \label{R1}
\mu^{ik}\mu^{jl} R_{ijkl} = \mathcal R - 3 \Phi^{AB} \mu^{ij} \mu^{kl}(D_i\phi_{Ak})(D_j\phi_{Bl}) = \mathcal R - \frac{3}{4}  {F^A}_{ij}    {F_A}^{ij} \, . 
\ee
To calculate the remaining terms in \eqref{scalarR1}, we eliminate $R_{ijkl}$ using the relation
\be
D_j D_k \phi_{Al}={\phi_A}^{i}R_{ijkl} \, ,
\ee
which is satisfied by any Killing vector field. We then have
\ba \label{DDphi}
{\phi_B}^{k}{\phi_A}^{i}R_{ijkl}={\phi_B}^{k}D_j D_k \phi_{Al} &= D_j\left({\phi_B}^{k} D_k \phi_{Al}\right) - \left(D_j{\phi_B}^{k} \right)\left( D_k \phi_{Al} \right)\\
&= -\frac{1}{2}D_j D_l \Phi_{AB} - \left(D_j{\phi_B}^{k} \right)\left( D_k \phi_{Al} \right) \, ,
\ea
where \eqref{DPhi} was used in the second line. Contracting \eqref{DDphi} with $2\mu^{jl}\Phi^{AB}$ and eliminating derivatives of ${\phi_A}^i$ with \eqref{Dphi}, we obtain
\be \label{R2}
2\mu^{jl}\Phi^{AB}{\phi_A}^{i} {\phi_B}^{k}R_{ijkl} = -\Phi^{AB}\mathcal D^2 \Phi_{AB} + \frac{1}{2} {F^A}_{ij}    {F_A}^{ij} + \frac{1}{2} \Phi^{AC}\Phi^{BD}(\mathcal D_i\Phi_{AB})(\mathcal D^{i}\Phi_{CD}) \, .
\ee
On the other hand, contracting \eqref{DDphi} with $\Phi^{CD}  {\phi_C}^{j} {\phi_D}^{l}$, we obtain
\be \label{R3}
\Phi^{CD}  {\phi_C}^{j} {\phi_D}^{l} \Phi^{AB} {\phi_A}^{i} {\phi_B}^{k}R_{ijkl} =\frac{1}{4}  \left(\Phi^{AC}\Phi^{BD}  - \Phi^{AB} \Phi^{CD} \right) \left(\mathcal D_i \Phi_{AB} \right)\left(\mathcal D^i \Phi_{CD}\right) \, .
\ee
Putting \eqref{R1}, \eqref{R2}, and \eqref{R3} together, we obtain our final result for the scalar curvature,
\be
R = \mathcal R - \frac{1}{4} {F^A}_{ij}    {F_A}^{ij} -\Phi^{AB}\mathcal D^2 \Phi_{AB} + \frac{1}{4}\left(3\Phi^{AC}\Phi^{BD}  - \Phi^{AB} \Phi^{CD} \right) \left(\mathcal D_i \Phi_{AB} \right)\left(\mathcal D^i \Phi_{CD}\right) \, .
\ee
Since we have
\be
K^{ij}K_{ij} = k^{ij}k_{ij} + K^{AB}K_{AB} + 2 \mathcal K^{Ai} \mathcal K_{Ai} \, ,
\ee
the Hamiltonian constraint equation, \eqref{Hconstraint}, takes the form
\ba \label{hamiltonianconstraint}
\mathcal R  -\Phi^{AB}\mathcal D^2 \Phi_{AB} &+ \frac{1}{4}\left(3\Phi^{AC}\Phi^{BD}  - \Phi^{AB} \Phi^{CD} \right) \left(\mathcal D_i \Phi_{AB} \right)\left(\mathcal D^i \Phi_{CD}\right)  \\
&- 2 \mathcal K^{Ai} \mathcal K_{Ai} 
 -\left[ \frac{1}{4} {F^A}_{ij}    {F_A}^{ij} + k^{ij}k_{ij} + K^{AB}K_{AB} \right] = 0 \, .
\ea

Equations \eqref{momentumconstraint1}, \eqref{momentumconstraint2}, and \eqref{hamiltonianconstraint} express the constraint equations entirely in terms of the desired variables, which are well defined on $\mathcal O$. These equations therefore may be interpreted equally well as equations holding on the manifold $\widetilde{\Sigma}$ (with indices raised and lowered with $h^{ij}$ and $h_{ij}$) or as equations holding on the manifold $\mathcal O$ (with indices raised and lowered with $\mu^{ij}$ and $\mu_{ij}$). For our purposes, it will be most useful to view them as equations holding on $\mathcal O$.

\section{Construction of a Conformally Modified Scaling Perturbation} \label{conformalsection}

In this section, we will construct a linearized perturbation off of a stationary and axisymmetric black hole spacetime that will be used in the proof of the theorem of the next section. The perturbation will be the sum of a perturbation that scales up ${F^A}_{ij}$ and the polar parts of $K_{ij}$ plus a perturbation that conformally modifies $h_{ij}$ and $K_{ij}$. We will also establish some key properties of this perturbation.

Let $(h_{ij}, K_{ij})$ be initial data on a maximal slice $\Sigma$ for a stationary and axisymmetric, asymptotically flat, black hole solution to the vacuum Einstein equation, with trivial action of the stationary-axisymmetric symmetry group $G$. Let ${A^A}_i$ be as defined in the paragraph following \eqref{phiprop}. Consider the perturbation $(\delta_1h_{ij}, \delta_1K_{ij})$ to the initial data on $\Sigma$ given by
\be
\delta_1 h_{ij} = 2 \phi_{A(i} {A^A}_{j)} \, ; \qquad \delta_1 {\mathcal K^A}_i = 0 \, , \quad\delta_1 k_{ij} = k_{ij}\, , \quad \delta_1 K^{AB} = K^{AB} \,  ,
\label{pert1}
\ee
where ${\mathcal K^A}_i $, $k_{ij}$, and $K^{AB}$ were defined by the decomposition \eqref{Kdecomp}. Note that in addition to the contributions from $\delta_1 k_{ij}$ and $\delta_1 K^{AB}$, 
$\delta_1 K_{ij}$ will get contributions from $\delta_1 \phi_{Ai} = \delta_1 (h_{ij} {\phi_A}^j) = (\delta_1 h_{ij}) {\phi_A}^j = A_{Ai}$ in \eqref{Kdecomp}.
In terms of the variables introduced in section \ref{manifoldoforbits}, the metric perturbation defined by \eqref{pert1} satisfies 
\be
\delta_1 \Phi_{AB} = \delta_1 \mu_{ij} = 0 \, , \qquad   \delta_1 {A^A}_i = {A^A}_i \, .
\ee
Note that both $\delta_1 h_{ij}$ and $\delta_1 K_{ij}$ extend smoothly to the axes $\mathcal A$, i.e., they define smooth perturbations on $\Sigma$.

We have chosen the perturbation \eqref{pert1} so that $\delta_1 h_{ij}$ is purely axial and $\delta_1 K_{ij}$ is purely polar (see the end of section \ref{assumptions}). In terms of the variables appearing in the constraint equations given in the previous section (viewed as equations holding on tensors on the manifold of orbits), this perturbation scales up ${F^A}_{ij}$ and the polar parts, $k_{ij}$  and $K^{AB}$, of the extrinsic curvature, but it leaves the quantities $\Phi_{AB}$, $\mu_{ij}$, $\mathcal D_i$, and ${\mathcal K^A}_i$ unchanged. 
By inspection, the perturbation \eqref{pert1} satisfies the linearization of the momentum constraints \eqref{momentumconstraint1} and \eqref{momentumconstraint2}. However, it fails to satisfy the linearization of the Hamiltonian constraint \eqref{hamiltonianconstraint}. Indeed, we obtain
\be
\delta_1 \left(R-K^{ij}K_{ij}\right) =  -\frac{1}{2} {F^A}_{ij} {F_A}^{ij} - 2k^{ij}k_{ij} - 2K^{AB}K_{AB} \, .
\ee

To rectify this, we add the linearized conformal transformation
\ba \label{conformal}
\delta_2 h_{ij} &= \psi h_{ij}\\
\delta_2 K^{ij} &=- \left(\frac{d+1}{2}\right)\psi K^{ij} \, .
\ea
Since ${K^i}_i = 0$, this perturbation can be seen to satisfy the linearized momentum constraint for any choice of $\psi$ by the following calculation:
\ba \label{conformalmomentumconstraint}
\delta_2 \left( D_i K^{ij} \right) &= D_i \delta_2 K^{ij} + K^{k(i}h^{j)l} \left(D_i \delta_2 h_{kl} + D_k \delta_2 h_{il} - D_l \delta_2 h_{ik} \right) \\
&=- \left(\frac{d+1}{2}\right) K^{ij} D_i\psi + K^{k(i}h^{j)l} \left(h_{kl}D_i \psi   + h_{il}D_k \psi  - h_{ik} D_l \psi  \right) \\
&=- \left(\frac{d+1}{2}\right) K^{ij} D_i\psi + K^{ij}D_i \psi   + \frac{1}{2} (d-1)K^{ij}D_i \psi \\
&=0   \, .
\ea
On the other hand, the linearized conformal perturbation fails to satisfy the linearized Hamiltonian constraint. We have
\ba \label{conformalhamiltonianconstraint}
\delta_2 \left(R-K^{ij}K_{ij}\right) &= -h^{ij}D^2 \delta_2 h_{ij}+D^i D^j \delta_2 h_{ij}-R^{ij}\delta_2 h_{ij} - 2 K_{ij} \delta_2 K^{ij} - 2K^{ij}{K_j}^{k}\delta_2 h_{ik} \\
&= -(d-1)D^2 \psi + D^2 \psi - R \psi + (d+1)K^{ij}K_{ij}\psi - 2 K^{ij} K_{ij} \psi \\
&= (d-2)\left(-D^2\psi  +  K^{ij}K_{ij}\psi\right) \, ,
\ea
where the background constraint, $R=K_{ij}K^{ij}$, was used.

Now define $\mathcal Q$ by
\be
\mathcal Q \equiv \frac{1}{2} {F^A}_{ij}    {F_A}^{ij} + 2 k^{ij}k_{ij} +2 K^{AB}K_{AB}  \, .
\label{Q}
\ee
Clearly, we have $\mathcal Q \geq 0$. Let $\psi$ to be the solution to
\be \label{e2}
-D^2\psi  +  K^{ij}K_{ij}\psi = \frac{\mathcal Q}{d-2}
\ee
with boundary conditions $\psi = 0$ at the bifurcation surface $\mathcal B$ and $\psi \to 0$ at infinity. By standard arguments \cite{CantorBrill, ChruscielDelay} there exists a unique solution to \eqref{e2} satisfying these boundary conditions. It follows immediately that the perturbation
\be
\delta h_{ij} = \delta_1 h_{ij} + \delta_2 h_{ij} \, , \quad \quad \delta K_{ij} = \delta_1 K_{ij} + \delta_2 K_{ij}
\label{pert}
\ee
satisfies the linearized momentum and Hamiltonian constraints. This is the perturbation of interest, which we will use in the proof of the theorem of the next section. We have the following lemma on some key properties of this perturbation.

\begin{lemma}
The perturbation \eqref{pert} (see \eqref{pert1}, \eqref{conformal}, and \eqref{e2}) satisfies $\delta A = 0$ and $\delta J_A = 0$, where $A$ denotes the area of the bifurcation surface $\mathcal B$, and $J_A$ is the ADM angular momentum. Furthermore, if $\mathcal Q \neq 0$, then $\delta M > 0$, where $M$ is the ADM mass.
\end{lemma}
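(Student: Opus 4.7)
The plan is to treat the three assertions separately.

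For $\delta A = 0$: since $A = \int_{\mathcal B}\sqrt{q}$ with induced metric $q_{ij}$ on $\mathcal B$, we have $\delta A = \tfrac{1}{2}\int_{\mathcal B} q^{ij}\delta h_{ij}$. The axial isometries preserve $\mathcal B$, so each ${\phi_A}^i$ is tangent to $\mathcal B$, and hence $q^{ij}\phi_{Ai} = {\phi_A}^j$; combined with ${A^A}_j {\phi_B}^j = 0$, this gives $q^{ij}\delta_1 h_{ij} = 2{\phi_A}^j {A^A}_j = 0$. The conformal piece contributes $q^{ij}\delta_2 h_{ij} = (d-2)\psi$, which vanishes on $\mathcal B$ by the boundary condition $\psi|_{\mathcal B}=0$ imposed after \eqref{e2}. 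For $\delta J_A = 0$: use $J_A = \tfrac{1}{8\pi}\oint_\infty K_{ij}{\phi_A}^i dS^j$ on the maximal slice. Decomposing $K_{ij}$ via \eqref{Kdecomp} yields $K_{ij}{\phi_A}^i = K^{BC}\Phi_{AB}\phi_{Cj} + \Phi_{AB}{\mathcal K^B}_j$; rotational Killing fields are asymptotically tangent to the spheres at infinity so $\phi_{Cj}n^j \to 0$ there, and only $\Phi_{AB}{\mathcal K^B}_j n^j$ controls $J_A$. By construction $\delta_1 \Phi_{AB} = 0$ and $\delta_1 {\mathcal K^B}_j = 0$, so the leading contribution is preserved; the $\delta_1$-variations of the other pieces ($K^{BC}$, $\phi_{Ai}$) decay too fast at infinity to contribute. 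The conformal piece attaches an overall factor $\psi \to 0$ at infinity, so $\delta_2 J_A = 0$.

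For $\delta M > 0$, split $\delta M = \delta_1 M + \delta_2 M$. Asymptotic flatness and the axis regularity of Appendix \ref{axisappendix} imply $\phi_{Ai} = O(r)$ and ${A^A}_j = O(r^{-(d-3)})$, so $\delta_1 h_{ij} = O(r^{-(d-2)})$, one power of $r$ faster than the $O(r^{-(d-3)})$ rate that controls the ADM mass; hence $\delta_1 M = 0$. The standard formula for the ADM mass variation under a conformal change then gives $\delta M = \delta_2 M = -\tfrac{d-2}{16\pi}\oint_\infty \partial^i\psi\,dS_i$.

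To fix the sign of this surface integral, invoke the maximum principle for \eqref{e2}. The operator $-D^2 + K^{ij}K_{ij}$ has non-negative potential and is strictly positive on $\Sigma$, so with zero Dirichlet data ($\psi|_{\mathcal B}=0$, $\psi \to 0$ at infinity) and non-negative source $\mathcal Q/(d-2)$, we obtain $\psi \geq 0$, with $\psi \not\equiv 0$ whenever $\mathcal Q \not\equiv 0$. Writing $\psi(x) = \int_\Sigma G_\Sigma(x,y)\,\mathcal Q(y)/(d-2)\,dV(y)$ via the positive Green's function $G_\Sigma$ for this boundary value problem, the leading-order asymptotic expansion at infinity is $\psi \sim c/r^{d-3}$ with $c > 0$. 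Consequently $\oint_\infty \partial^i\psi\,dS_i < 0$ and $\delta M > 0$. The main obstacle will be the asymptotic bookkeeping---establishing the precise fall-off rates used to justify $\delta_1 M = 0$ and to control subleading pieces in $\delta J_A$, together with the positivity of $G_\Sigma$ and the asymptotic expansion of $\psi$---but once those inputs are granted, the positivity $\delta M > 0$ follows cleanly.
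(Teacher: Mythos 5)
Your treatment of $\delta A=0$ and $\delta J_A=0$ is essentially the paper's argument, made a bit more explicit (the paper additionally checks that the $\delta_1$-variation of the unit normal, $\delta_1 r^j=-r_iA^{Bi}{\phi_B}^j$, is orthogonal to $\mathcal K_{Aj}$, a point you gloss over with a fall-off remark, but this is minor). The substantive issue is your proof of $\delta M>0$. You correctly reduce it to showing that the leading coefficient $c$ in $\psi\sim c/r^{d-3}$ is \emph{strictly} positive, but the inputs you offer --- $\psi\geq 0$ from the maximum principle and positivity of the Green's function $G_\Sigma$ in the interior --- only yield $c\geq 0$: a nonnegative solution could in principle decay faster than $r^{-(d-3)}$, giving $\delta M=0$ rather than $\delta M>0$, and ruling this out is exactly the point of the lemma (the theorem hinges on $\delta M>0$ whenever $\mathcal Q\not\equiv 0$, so that the first law forces $\mathcal Q=0$). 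Establishing $\lim_{|x|\to\infty}|x|^{d-3}G_\Sigma(x,y)>0$ requires a Hopf-type boundary-point argument ``at infinity,'' which you flag as an input to be granted but do not supply; that is precisely the nontrivial step. The paper closes this gap with a different device: it introduces the auxiliary function $\xi$ solving the homogeneous equation $-D^2\xi+K^{ij}K_{ij}\xi=0$ with $\xi|_{\mathcal B}=0$ and $\xi\to 1$ at infinity, obtains $\xi>0$ on $\Sigma\setminus\mathcal B$ from the strong maximum principle, and then integrates the divergence identity $D^i\left(\xi D_i\psi-\psi D_i\xi\right)=-\xi\mathcal Q/(d-2)$ over $\Sigma$ to convert the flux integral into the manifestly positive bulk expression $\delta M=\frac{1}{16\pi}\int_\Sigma\xi\mathcal Q$. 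Your coefficient $c$ is in effect proportional to $\int_\Sigma\xi\mathcal Q$, so the two routes are dual; but the $\xi$-identity is what actually delivers strict positivity. If you wish to keep the Green's-function language, you would need to show that $\lim_{|x|\to\infty}|x|^{d-3}G_\Sigma(x,y)$, as a function of $y$, is a positive solution of the homogeneous equation --- at which point you have rediscovered $\xi$.
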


\begin{proof}
Since $\delta_1 h_{ij}$ is purely axial, we have $\delta_1 A = 0$. Since $\psi |_\mathcal B = 0$, we have $\delta_2 A = 0$. Hence, we have $\delta A = 0$.

The ADM angular momentum $J_A$ is given by
\be
J_A = \frac{1}{8\pi}\int_{S_\infty} K_{ij} {\phi_A}^ir^j \, ,
\ee
where $r^j$ denotes the outward pointing normal of a sphere, $S_\infty$, with radius taken to infinity. Only the axial part of extrinsic curvature contributes to this integral, so we may write (see \eqref{Kdecomp})
\be
J_A = \frac{1}{8\pi}\int_{S_\infty} 2 {\mathcal K^B}_{(i} \phi_{|B| j)} {\phi_A}^i r^j = \frac{1}{8\pi}\int_{S_\infty} {\mathcal K^B}_j \Phi_{AB} r^j = \frac{1}{8\pi}\int_{S_\infty} {\mathcal K}_{Aj} r^j \, .
\ee
We have $\delta_1 {\mathcal K}_{Aj} = 0$, and $\delta_1 r^j=-r_i A^{Bi}{\phi_B}^j$ is orthogonal to ${\mathcal K}_{Aj}$, so $\delta_1 J_A = 0$. On the other hand, since $\psi \to 0$ at infinity, we have $\delta_2 J_A = 0$. Hence, we obtain $\delta J_A = 0$, as we desired to show.

The perturbed ADM mass is given by
\be
\delta M = \frac{1}{16\pi}\int_{S_\infty} r^i h^{jk}\left( D_k \delta h_{ij} - D_i \delta h_{jk} \right) \, .
\label{admm}
\ee
Since $\delta_1 h_{ij}$ is purely axial, we have $\delta_1 M = 0$. On the other hand, substitution of $\delta_2 h_{ij} = \psi h_{ij}$ in \eqref{admm} yields
\be
\delta_2 M = -\frac{1}{16\pi}(d-2)\int_{S_\infty} r^i D_i \psi \, .
\ee
To evaluate this, let $\xi$ be the solution to
\be \label{homogeneous}
-D^2 \xi + K^{ab} K_{ab} \xi = 0 \, ,
\ee
with boundary conditions $\xi=0$ at $\mathcal B$ and $\xi\rightarrow 1$ at infinity; a unique solution exists by standard arguments\footnote{Note that by choosing a smooth function $f$ such that $f = 1$ in a neighborhood of infinity and $f = 0$ at $B$, the problem at hand can be converted to that of solving the inhomogeneous equation $-D^2 \chi + K^{ab} K_{ab} \chi =  F$ with standard boundary conditions $\chi = 0$ at $B$ and $\chi \to 0$ at infinity, where $F = D^2 f - K^{ab} K_{ab} f$.} \cite{CantorBrill, ChruscielDelay}. By the strong maximum principle (see, e.g., \cite{Gilbarg}), we have $\xi > 0$ everywhere on $\Sigma \setminus \mathcal B$. By \eqref{e2} and \eqref{homogeneous}, we have
\be
D^i\left( \xi D_i \psi - \psi D_i \xi \right) =  -\frac{ \xi\mathcal Q}{d-2} \, .
\ee
Integrating this equation over $\Sigma$ using the boundary conditions imposed on $\psi$ and $\xi$ at spatial infinity and at $\mathcal B$ yields
\be
\int_{S_\infty} r^i  D_i \psi = - \int_\Sigma \frac{ \xi\mathcal Q}{d-2} \, .
\ee 
Thus, we obtain
\be
\delta M = \frac{1}{16 \pi} \int_\Sigma \xi {\mathcal Q} \, .
\ee
Since $\mathcal Q \geq 0$, if $\mathcal Q \neq 0$ anywhere on $\Sigma$, we have $\delta M > 0$, as we desired to show.
\end{proof}

\section{Existence of a Reflection Isometry} \label{proofsec}

We now have all of the ingredients in place to state and prove the main result of this paper:

\begin{theorem}
Let $(M,g_{ab})$ be a $d$-dimensional solution to Einstein's equation in vacuum representing an asymptotically flat, strongly asymptotically predictable, stationary and axisymmetric black hole with a non-degenerate horizon. Suppose that the action of the stationary-axisymmetric isometry group $G$ is trivial, as defined in the first paragraph of section \ref{assumptions}. Then there exists an isometry $i: M \to M$ that reverses the direction of the timelike Killing field $t^a$ and axial Killing fields ${\phi_A}^a$ (i.e., $i^*t^a = -t^a$ and $i^*{\phi_A}^a = - {\phi_A}^a$) and leaves invariant a $(d-n-1)$-dimensional surface $\mathcal S$ that is orthogonal to $t^a$ and ${\phi_A}^a$. In particular, the metric may be put in a ``block diagonal'' form.
\end{theorem}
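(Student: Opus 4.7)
The plan is to reduce the existence of the spacetime reflection isometry $i$ to the vanishing of two geometric obstructions on the maximal hypersurface $\Sigma$, and then use the perturbation of Section~\ref{conformalsection} together with the first law of black hole mechanics to force both obstructions to vanish. As discussed at the end of Section~\ref{assumptions}, it suffices to produce an involution $i_\Sigma:\Sigma\to\Sigma$ with $i_\Sigma^* h_{ij}=h_{ij}$, $i_\Sigma^*{\phi_A}^i=-{\phi_A}^i$, and $i_\Sigma^*K_{ij}=-K_{ij}$; given such an $i_\Sigma$, the spacetime map sending the point at proper time $\tau$ along the unit normal geodesic from $s\in\Sigma$ to the point at proper time $-\tau$ along the unit normal geodesic from $i_\Sigma(s)$ is an isometry by Cauchy uniqueness. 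Using the parametrization \eqref{metform2}, the existence of $i_\Sigma$ is equivalent locally on $\widetilde\Sigma$ to the possibility of gauging $A_{\Lambda\mu}$ to zero (equivalently, by Section~\ref{manifoldoforbits}, to ${F^A}_{ij}=0$) together with the vanishing of the polar part of $K_{ij}$, namely $k_{ij}=0$ and $K^{AB}=0$.

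The central step is then to argue by contradiction using the perturbation constructed in Section~\ref{conformalsection}. The lemma just established shows that $(\delta h_{ij},\delta K_{ij})$ of \eqref{pert} satisfies the linearized constraints with $\delta A=\delta J_A=0$, while $\delta M>0$ whenever $\mathcal Q\neq 0$, where $\mathcal Q$ is defined in \eqref{Q}. However, the first law \eqref{1stlaw} applied to this admissible perturbation gives
\begin{equation}
\delta M \;=\; \frac{\kappa}{8\pi}\,\delta A + \sum_\Lambda \Omega^\Lambda\,\delta J_\Lambda \;=\; 0.
\end{equation}
Hence $\mathcal Q\equiv 0$ on $\Sigma$, and by the manifest non-negativity of each term in \eqref{Q} we conclude ${F^A}_{ij}=0$, $k_{ij}=0$, and $K^{AB}=0$ everywhere on $\Sigma$.

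With these three obstructions gone, the Frobenius analysis of Section~\ref{manifoldoforbits} (\eqref{dphi2}--\eqref{dphi}) shows that the span of the axial Killing fields is surface-orthogonal on $\widetilde\Sigma$. The triviality assumption on the $G$-action then lets me choose a \emph{global} cross-section $\mathcal S\subset\widetilde\Sigma$ orthogonal to the fibers; a redefinition $\varphi^\Lambda\to\varphi^\Lambda+f^\Lambda(x^\mu)$ associated with this cross-section sets $A_{\Lambda\mu}=0$ in \eqref{metform2}, and the axis regularity conditions of Appendix~\ref{axisappendix} ensure that the resulting chart extends smoothly across $\mathcal A$. The discrete map $(\varphi^\Lambda,x^\mu)\mapsto(-\varphi^\Lambda,x^\mu)$ is then an isometry of $(\Sigma,h_{ij})$ reversing ${\phi_A}^i$; combined with $k_{ij}=K^{AB}=0$ it also reverses $K_{ij}$. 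Evolving as above produces the required $i$ on the domain of dependence of $\Sigma$, and simple connectedness of the domain of outer communications (topological censorship) lets the locally constructed isometry be extended globally without monodromy.

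The main obstacle I would expect is the passage from the pointwise vanishing of $\mathcal Q$ to a genuine \emph{global, smooth} involution $i_\Sigma$: the perturbation, the cross-section $\mathcal S$, and the gauge rotation $\varphi^\Lambda\to\varphi^\Lambda+f^\Lambda$ are all sensitive to the global topology of $\widetilde\Sigma$ and to regularity across the axes, and the identification of $\widetilde\Sigma$ with $\mathcal O\times G_\Sigma$ together with the axis regularity results of Appendix~\ref{axisappendix} is precisely what makes this assembly possible. Everything else --- the linearized constraint analysis, the vanishing of the boundary integrals giving $\delta A$ and $\delta J_A$, and the strong-maximum-principle positivity of $\xi$ producing $\delta M>0$ --- is comparatively routine once the geometric setup on $\mathcal O$ has been organized as in Sections~\ref{manifoldoforbits}--\ref{constraintssec}.
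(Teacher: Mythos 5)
Your proposal is correct and follows essentially the same route as the paper: the perturbation of Section~\ref{conformalsection} together with the first law forces $\mathcal Q=0$, hence ${F^A}_{ij}=k_{ij}=K^{AB}=0$, and simple connectedness (topological censorship) plus the axis regularity conditions then yield the global gauge $A_{\Lambda\mu}=0$ and the reflection $i_\Sigma$, which is promoted to a spacetime isometry by Cauchy uniqueness. The only cosmetic difference is that the paper invokes simple connectedness of $\Sigma$ precisely to write ${A^A}_i=D_i\chi^A$ and thereby obtain the orthogonal cross-section, whereas you fold that use into your final ``no monodromy'' remark; the content is the same.
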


\begin{proof}
Let $\Sigma$ be a maximal (${K^i}_i = 0$) Cauchy surface and consider the perturbation $(\delta h_{ij}, \delta K_{ij})$ constructed in the previous section. This perturbation satisfies $\delta A = \delta J_A = 0$. Hence, by the first law of black hole mechanics \eqref{1stlaw}, we have $\delta M = 0$. By the lemma of the previous section, we therefore have $\mathcal Q = 0$, where $\mathcal Q$ was defined by \eqref{Q}. Consequently, the polar parts, $k_{ij}$ and $K_{AB}$, of the extrinsic curvature vanish and we have
\be
{F^A}_{ij} = 0 \, .
\ee
However, ${F^A}_{ij} = (d A^A)_{ij}$, and, as noted in section \ref{manifoldoforbits}, the axis regularity conditions implied by the trivial action of $G$ (see appendix \ref{axisappendix}) ensure that this relation holds everywhere on $\Sigma$, not just on $\widetilde{\Sigma}$. By the topological censorship theorem, $\Sigma$ is simply connected, so there exists a smooth $\mathcal V$-valued function $\chi^A$ such that ${A^A}_i = D_i \chi^A$. Hence, we can set ${A^A}_i = 0$ by choosing a new cross-section $\mathcal S$ by displacing the original cross-section by $-\chi^\Lambda$ along the orbit of each ${\phi_\Lambda}^i$. Equivalently, in terms of the coordinates used in \eqref{metform2}, we redefine $\varphi^\Lambda \to \varphi^\Lambda + \chi^\Lambda$ and thus set $A_{\Gamma \mu} = 0$. This shows that $h_{ij}$ possesses the desired reflection symmetry under $\varphi^\Lambda \to - \varphi^\Lambda$. As shown at the end of section \ref{assumptions} above, the vanishing of the polar part of $K_{ij}$ then proves the existence of the desired $(t$-$\phi)$-reflection isometry. 
\end{proof}

\noindent
{\bf Remarks:} 
\begin{itemize}

\item If the action of $G$ were such that $\widetilde{M}$ corresponded to a nontrivial principal bundle, then the curvature, ${F^A}_{ij}$ must be nonvanishing, and one would not be able to find a $(d-n-1)$-dimensional surface $\mathcal S$ that is orthogonal to $t^a$ and ${\phi_A}^a$. Hence, a ``$(t$-$\phi)$-reflection'' isometry having the properties stated in the theorem cannot exist when the action of $G$ corresponds to a nontrivial principal bundle.

\item  It is interesting to see why our proof breaks down when $G$ corresponds to a nontrivial bundle. The assumption of a trivial action of $G$ was used in the proof to give a global definition of ${A^A}_i$ on $\Sigma$. This allowed us to define the metric perturbation \eqref{pert1}, which has the effect of scaling up ${F^A}_{ij}$. However, if the action of $G$ corresponded to a nontrivial fiber bundle structure on $\widetilde{\Sigma}$, then ${F^A}_{ij}$ would contain a magnetic monopole when viewed as a tensor field on the manifold of orbits $\mathcal O$. Since magnetic monopole charge is ``quantized,'' there cannot exist a perturbation that scales ${F^A}_{ij}$. In this way, the assumption of a trivial action of $G$ was essential to our proof. 

\item If the horizon is non-rotating, i.e., if $\Omega^\Lambda = 0$ for all $\Lambda$, then $\delta J_\Lambda$ does not contribute to the first law \eqref{1stlaw}. Instead of choosing the perturbation \eqref{pert1}, we may choose $\delta'_1 h_{ij} = 0$, $\delta'_1 K_{ij} = K_{ij}$ and then modify this perturbation by a conformal perturbation so as to satisfy all of the constraints. A repetition of the argument used in the proof of the above theorem then shows that $K_{ij} = 0$ and, thus, $t^a$ is hypersurface orthogonal---i.e., the spacetime is static---without the need to assume trivial action of $G$. This is precisely the staticity theorem of Sudarsky and Wald \cite{sudwald}. If, in addition, we assume that the action of $G$ is trivial, then our theorem above also proves that the axial symmetries by themselves (i.e., without inclusion of the time translations) are surface orthogonal.

\item More generally, suppose there is a subgroup, $G' \subset G$, that includes the stationary isometries and all axial Killing fields with non-zero horizon rotation, i.e., suppose that the horizon Killing field lies in the span of the Killing fields of $G'$. Suppose further that $G'$ also has trivial action. Then our theorem also holds with $G$ replaced by $G'$. Hence, we obtain an additional reflection isometry $i'$. The isometry $i \circ i'$ maps the Killing fields in $G'$ to themselves, but reflects a complementary subspace of Killing fields in $G$. The Killing fields in this complementary subspace must be everywhere orthogonal to $t^a$ 
and the axial Killing fields in $G'$. The previous remark is a special case of this remark in the case of a non-rotating black hole, where we can choose $G'$ to be the time translations. 

\item It is worth noting that all of the known (to us) exact black hole vacuum solutions in higher dimensions can be seen to obey the theorem. Specifically, the Myers-Perry \cite{MyersPerry} solutions in $d$-dimensions (which are stationary and axisymmetric with $\left\lfloor\frac{d-1}{2}\right\rfloor$ commuting axial Killing fields) and the doubly-spinning black ring solution \cite{PomeranskySenkov} in $d=5$ dimensions are ($t$-$\phi$)-reflection symmetric. (Of course, the $d=5$ case with two axial Killing fields is also covered by the direct analog of the proof of Papapetrou and Carter, as discussed in section \ref{intro}.) Furthermore, the Myers-Perry \cite{MyersPerry} solutions when one or more of the axial Killing fields have zero horizon rotation and the singly-spinning black ring \cite{EmparanReall2} possess an additional reflection symmetry of the ``non-spinning'' axial Killing field, in accord with the previous remark.

\end{itemize}

\section{Generalizations} \label{gensec}

Our analysis in this paper has been restricted to asymptotically flat black holes that satisfy the vacuum Einstein equation. We conclude by making some remarks on possible generalizations of our results.

Einstein's equation entered our analysis in an essential\footnote{There are some additional places where Einstein's equation was used, such as for obtaining uniqueness of a maximal foliation (although Einstein's equation is not needed for its existence), for topological censorship, and for the specific formulas for ADM mass and angular momentum, but these uses do not appear as essential to the proof as the ones listed.} way in the following places: (1) the validity of the first law of black hole mechanics \eqref{1stlaw}; (2) the fact that the linearized momentum constraints can be solved by a scaling transformation of the form \eqref{pert1}; (3) the fact that the linearized momentum constraints hold for the conformal perturbation \eqref{conformal}; (4) the fact that for the conformal perturbation \eqref{conformal}, the linearized Hamiltonian constraint \eqref{conformalhamiltonianconstraint} can be solved for a given source.

Property (1) is a very general feature of diffeomorphism covariant theories of gravity \cite{IyerWald}. In particular, it will hold for Einstein's equation with a cosmological constant and it will hold if any matter sources derived from a Lagrangian are added. However, the addition of matter sources may result in the presence of additional terms in the first law of black hole mechanics associated with matter field charges. 

Property (2) will hold for Einstein's equation with a cosmological constant. It also will hold in many instances for Einstein's equation with matter sources, provided that the scaling transformation also suitably scales the axial part of the matter configuration variables and polar part of the matter momentum variables. In particular, it will hold for Maxwell fields and, more generally, for Yang-Mills fields. However, the scaling of the polar part of the electric field, $E^a$, will scale the electric charge, which---on account of the charge term in the first law of black hole mechanics---will invalidate the proof of our theorem. Nevertheless, in the case of Maxwell fields---but not in the case of Yang-Mills fields---one can reverse the roles of $E^a$ and the vector potential $A^a$ as done in \cite{sudwald} in order to prove the theorem.

Property (3) will hold for Einstein's equation with a cosmological constant.
It will also hold in many instances for Einstein's equation with matter sources, provided that the conformal scaling for the matter configuration and momentum variables are chosen so that the contribution to the momentum constraint from the matter fields, $\sqrt{h}n^a {h^b}_c T_{ab}$, is unchanged under the conformal perturbation. In particular, it will hold for Maxwell and Yang-Mills fields.

Property (4) will hold for Einstein's equation with a \emph{negative} cosmological constant, as including a cosmological constant $\Lambda$ introduces the term $-2\Lambda \psi/(d-2)$ to the left hand side of \eqref{e2}. It will also hold in many instances for Einstein's equation for matter sources, provided that certain energy properties hold\footnote{For example, Einstein's equation coupled to a Klein-Gordon scalar field satisfies properties (2) and (3), but is guaranteed to satisfy property (4) only if the mass of the Klein-Gordon field is non-positive, as the potential energy contributes to equation \eqref{e2} with the same sign as the cosmological constant term.}. In particular, it will hold for Maxwell and Yang-Mills fields.

Thus, our results should generalize straightforwardly to the Einstein-Maxwell case and to black holes in asymptotically AdS spacetimes. It is likely that additional generalizations can be made.

\section*{Acknowledgments}
We wish to thank Stefan Hollands and Kartik Prabhu for helpful discussions and comments. This research was supported in part by NSF grant PHY 12-02718 to the University of Chicago.

\appendix
\section{Axis Regularity Conditions} \label{axisappendix}

In this Appendix, we derive the regularity conditions that must hold on the axes, $\mathcal A$, under the assumption that the 
isometry group $G=R\times [U(1)]^n$ acts trivially. As explained in section \ref{assumptions}, we are interested in the case where $t^a$ is linearly independent of ${\phi_\Lambda}^a$ (and thus, in particular, is nonvanishing) in the region of interest in $M$ and the coordinate $t$ can be chosen so that the hypersurfaces of constant $t$ are spacelike and axisymmetric (i.e., ${\phi_\Lambda}^a$ are tangent to these surfaces); we therefore assume that these conditions hold. We consider the behavior of ${\phi_\Lambda}^a$ on one of these $(d-1)$-dimensional $t = \text{constant}$ hypersurfaces, $\Sigma$, in a neighborhood of an axis point $p \in \mathcal A$, i.e., $p$ is a point of $\Sigma$ at which the axial Killing fields become linearly dependent. We choose a new basis of axial Killing fields (if necessary) so that $m$ of the Killing fields vanish at $p$ while the other $(n-m)$ Killing fields are linearly independent at $p$. 

First, for simplicity, consider the case $m=1$, and let $\phi^i$ be the Killing field that vanishes at $p$. Consider the linear map on the tangent space to $\Sigma$ at $p$ given by
\be
{L^i}_j \equiv \left. D^i \phi_j\right|_p \, .
\ee
Since $D_i \phi_j$ is antisymmetric by Killing's equation, it follows that the linear map $i {L^i}_j$ is self-adjoint, and can therefore be diagonalized by an orthonormal basis of (complex) eigenvectors with real eigenvalues. These eigenvectors come in complex conjugate pairs, i.e., if $Z^i$ is an eigenvector with nonzero eigenvalue $\lambda$, then $\bar Z^i$ is an eigenvector with eigenvalue $-\lambda$. Thus, we can write
\be
i {L^i}_j = \sum_{\alpha=1}^{r}\lambda_\alpha \left( {Z_\alpha}^i \bar Z_{\alpha j} - \bar Z_\alpha^{\phantom\alpha i} Z_{\alpha j} \right) \, ,
\label{Lform1}
\ee
where each $\lambda_\alpha$ in this expansion is nonzero. Note that at least one term must occur in this sum, since if both $\phi^i$ and its derivative vanished at $p$, then $\phi^i$ would vanish identically. Thus, we have $r\geq1$. On the other hand, we have $r\leq \left\lfloor\frac{d-1}{2}\right\rfloor$, since we cannot have more than $\left\lfloor\frac{d-1}{2}\right\rfloor$ pairs of orthogonal vectors on a $(d-1)$-dimensional space. In particular, we automatically have $r=1$ in $d=4$ spacetime dimensions.

We now show that if more than one term occurs---i.e., if $r > 1$---then the orbits of $\phi^i$ near $p$ are homotopic to a point, and thus the group action cannot be trivial. Writing
\be
{Z_{\alpha}}^i = \frac{1}{\sqrt 2} \left( {x_{\alpha}}^i + i {y_{\alpha}}^i\right) \, ,
\ee
where ${x_1}^i, {y_1}^i, \dots, {x_r}^i, {y_r}^i$ are real orthonormal vectors in the tangent space to $\Sigma$ at $p$, we obtain
\be
L_{ij} = \sum_{\alpha=1}^{r}\lambda_\alpha \left(y_\alpha \wedge x_\alpha \right)_{ij}  \,  .
\ee
Thus, in the tangent space to $\Sigma$ at $p$, ${L^i}_j$ can be recognized as the generator of simultaneous rotations in the 2-planes spanned by $\left({x_\alpha}^i,{y_\alpha}^i\right)$. 
We can choose Euclidean coordinates 
\be\label{coords}
\left( x^1, y^1, \dots, x^r, y^r, z^1, \dots, z^{(d-1-2r)} \right)
\ee
on the tangent space to $\Sigma$ at $p$ such that
\be
{x_\alpha}^i = \left(\frac{\partial}{\partial x^\alpha}\right)^i , \qquad {y_\alpha}^i = \left(\frac{\partial}{\partial y^\alpha}\right)^i  \, ,
\ee
and promote these to Riemann normal coordinates on $\Sigma$ in a neighborhood, $\mathcal U$, of $p$. Within $\mathcal U$, the Killing field $\phi^i$ vanishes precisely on the surface $x_1=y_1=\dots=x_r=y_r=0$. If $\mathcal U$ is chosen to be sufficiently small, no other Killing field can vanish in $\mathcal U$. Thus, there exists a neighborhood, $\mathcal U$, of $p$ such that $\mathcal U \cap \widetilde{\Sigma}$ has the topology of $R^{(d-1)}$ with a plane of dimension $(d-1)-2r$ removed. If $r>1$, then $\mathcal U \cap \widetilde{\Sigma}$ is simply connected, so the Killing orbits of $\phi^i$ near $p$ are homotopic to a point, contradicting our assumption of trivial $[U(1)]^n$ action. Thus we must have $r=1$. Note that the Sorkin monopole (or the metric of eq. \eqref{sorkinmonopole}) corresponds to the case $r=2$ in $d=5$ dimensions. 

Thus, we have shown that a necessary condition for a trivial group action is that $L_{ij}=\left.D_i \phi_j \right|_p$ be a simple bivector, $L_{ij}=\lambda \left( y\wedge x\right)_{ij}$. We now derive axis regularity conditions in this case. First, we rescale $\phi^i$ if necessary via $\phi^i \to \phi^i/\lambda$ so that
\be
L_{ij}= \left( y\wedge x\right)_{ij} \, .
\ee
The action of the Killing field $\phi^i$ on the tangent space at $p$ is simply to perform a rotation in the $x^i$-$y^i$ plane, leaving invariant any vector ${z_\mu}^i$ that is orthogonal to $x^i$ and $y^i$. It follows that in the Riemannian normal coordinates $(x,y,z^\mu)$ constructed above, the Killing field $\phi^i$ takes the form
\be
\phi^i = x \left(\frac{\partial}{\partial y}\right)^i - y \left(\frac{\partial}{\partial x}\right)^i \, .
\ee
Define the polar coordinates $\varphi$ and $\rho$ in the Riemannian normal coordinate neighborhood $\mathcal U$ by
\be
\tan \varphi = y/x \, , \quad \quad \rho^2 = x^2 + y^2 \, .
\ee
Now extend the coordinates $\left( \varphi,\rho,z^\mu\right)$ off of $\Sigma$ by Killing transport along $t^a$. Then, in the coordinates $(t,\varphi, x^\mu)$ with $x^\mu = \{\rho, z^\nu \}$, the metric will take the form \eqref{metricform}. If we translate the smoothness of the components of the metric in the coordinates $(t,x,y,z^\mu)$ into corresponding conditions on the metric components in the coordinates $(t, \phi, x^\mu)$ as one approaches the axis $\rho=0$, one obtains the following conditions: (i) On any $t = {\rm const}$, $\phi = {\rm const}$ surface, the metric $\lambda_{\mu \nu}$ smoothly extends to $\rho=0$. (ii) On $M$, the scalar field $\beta/\Phi$ and one-form field $A_\mu dx^\mu/\Phi$ smoothly extend to $\rho=0$, where $\Phi = \phi^i \phi_i$. (iii) On $M$, the one-form field $B_\mu dx^\mu$ and the tensor field $\Phi d\phi^2 + \lambda_{\mu \nu} dx^\mu dx^\nu$ smoothly extend\footnote{Note that neither of the individual tensor fields $\Phi d\phi^2$ and  $\lambda_{\mu \nu} dx^\mu dx^\nu$ extend smoothly to $\rho=0$ on $\Sigma$ (although $\lambda_{\mu \nu} dx^\mu dx^\nu$ extends smoothly to $\rho=0$ when restricted to any $t, \phi = {\rm const}$ surface).} to $\rho=0$.

The general case where $1\leq m \leq n$ can be analyzed similarly. Suppose that ${\phi_\Lambda}^i$ for $\Lambda=1,\dots,m$ vanish at $p$. Let
\be
{{L_\Lambda}^i}_j \equiv \left. D^i \phi_{\Lambda j}\right|_p \, .
\ee
Since ${\phi_\Lambda}^i$ for $\Lambda=1,\dots,m$ are linearly independent as Killing fields, the linear maps ${{L_\Lambda}^i}_j$, on the tangent space to $\Sigma$ at $p$ must be linearly independent.
Since the ${\phi_\Lambda}^i$ commute, the linear maps ${{L_\Lambda}^i}_j$ must commute because
\ba
0 = \left.D_j\left[ \phi_\Lambda, \phi_\Sigma\right]^i \right|_p &= \left.D_j\left({\phi_\Lambda}^k D_k \phi_\Sigma^i - {\phi_\Sigma}^k D_k \phi_\Lambda^i\right)\right|_p \\
&= \left.\left((D_j{\phi_\Lambda}^k)( D_k \phi_\Sigma^i) -( D_j{\phi_\Sigma}^k)( D_k \phi_\Lambda^i)\right)\right|_p = {{\left[ L_\Sigma , L_\Lambda \right]}^i}_j \, .
\ea
Thus, the collection of linear maps $i {{L_\Lambda}^i}_j$  for $\Lambda=1,\dots,m$ are self-adjoint and commuting, so they can be simultaneously diagonalized by an orthonormal basis of complex eigenvectors with real eigenvalues, again coming in complex conjugate pairs. In parallel with \eqref{Lform1}, we have
\be
i {{L_\Lambda}^i}_j = \sum_{\alpha=1}^{r}\lambda_{\Lambda\alpha} \left( {Z_\alpha}^i \bar Z_{\alpha j} - \bar Z_\alpha^{\phantom\alpha i} Z_{\alpha j} \right)  \, ,
\ee
where the linear independence of the collection ${{L_\Lambda}^i}_j$ requires that $\lambda_{\Lambda\alpha}$ be an $m\times r$ (real) matrix with linearly independent rows. In particular, we have $r\geq m$, so that now $m\leq r\leq \left\lfloor\frac{d-1}{2}\right\rfloor$. Again, we may write
\be
{Z_{\alpha}}^i = \frac{1}{\sqrt 2} \left( {x_{\alpha}}^i + i {y_{\alpha}}^i\right) \, ,
\ee
where ${x_1}^i, {y_1}^i, \dots, {x_r}^i, {y_r}^i$ are real orthonormal vectors in the tangent space to $\Sigma$ at $p$. We obtain
\be
L_{\Lambda ij}  = \sum_{\alpha=1}^{r}\lambda_{\Lambda\alpha} \left(y_\alpha \wedge x_\alpha \right)_{ij}  \,  .
\ee

Consider, first, the case $r>m$. Then there exists at least one Killing field $\phi^i$ in the collection whose derivative, $L_{ij}$, at $p$ is not a simple bivector, and such that no linear combination of the other Killing fields in the collection can be added to $\phi^i$ so as to make its derivative at $p$ be a simple bivector. By arguments similar to the above case where $m = 1$, one can then show that the orbits of $\phi^i$ in a neighborhood of $p$ are homotopic to a point. Thus, for a trivial group action, the case $r > m$ cannot occur.

In the case $r=m$, we can find a basis of ${\phi_\Lambda}^i$ where each $L_{\Lambda ij}$ is a simple bivector (i.e., we can diagonalize $\lambda_{\Lambda\alpha}$). Furthermore, we can rescale the Killing fields so that each bivector has unit coefficient, i.e., we have
\be
L_{\Lambda ij}  = \left(y_\Lambda \wedge x_\Lambda \right)_{ij} \, ,
\ee
where it is understood that no sum over $\Lambda$ is taken.
In Riemannian normal coordinates $({x^\Lambda}, {y^\Lambda}, {z^\mu})$ associated with the orthonormal basis $({x_\Lambda}^i, {y_\Lambda}^i, {z_\mu}^i)$, the Killing fields ${\phi_\Lambda}^i$ take the form
\be
{\phi_\Lambda}^i = x^\Lambda \left(\frac{\partial}{\partial y^\Lambda}\right)^i - y^\Lambda \left(\frac{\partial}{\partial x^\Lambda}\right)^i \, ,
\ee
where, again, no sum over $\Lambda$ is taken. We can again define polar coordinates
\be
\tan \phi^\Lambda = y^\Lambda/x^\Lambda \, , \quad \quad \left(\rho^\Lambda\right)^2 = \left(x^\Lambda\right)^2 + \left(y^\Lambda\right)^2 \, ,
\ee
so as to put the metric in the form \eqref{metricform}. The smoothness of the metric components in the coordinates $(t,x^\Lambda,y^\Lambda,z^\mu)$ then imply the axis regularity conditions stated in 
section \ref{assumptions}.

\bibliography{mybib}

\end{document}